\begin{document}

\title{Short seed extractors against quantum storage}

\author{Amnon Ta-Shma\thanks{Department
of Computer Science, Tel-Aviv University, Tel-Aviv 69978, Israel.
Supported by the European Commission under the Integrated Project
QAP funded by the IST directorate as Contract Number 015848, by
Israel Science Foundation grant 217/05 and by USA Israel BSF grant
2004390. Email: amnon@tau.ac.il. }}

\maketitle

\newcommand{\remove}[1]{}

\newcommand{\set}[1]{{\left\{ #1\right\}}}
\newcommand{\B}[0]{{\left\{0,1\right\}}}
\newcommand{\zo}{\set{0,1}}

\newcommand{\abs}[1]{\left| #1 \right|}
\newcommand{\norm}[1]{\left\| #1 \right\|}
\newcommand{\half}{\frac 1 2}
\newcommand{\minent}{{H_{\infty}}}
\newcommand{\Tr}{\mathrm{Trace}}
\newcommand{\eps}{{\epsilon}}
\newcommand{\logeps}{\log \epsilon^{-1}}
\newcommand{\eqdef}{\stackrel{\rm def}{=}}
\newcommand{\wh}[1]{\widehat{#1}}
\newcommand{\poly}{{\rm poly}}

\newcommand{\ket}[1]{\left|#1\right\rangle}
\newcommand{\bra}[1]{\left\langle #1\right|}
\newcommand{\braket}[2]{\left.\left\langle #1\right|#2\right\rangle}
\newcommand{\ketbra}[2]{\ket{#1}\!\bra{#2}}
\newcommand{\tensor}{{\otimes}}

\newcommand{\enc}[3]{{#1} \stackrel{{#3}}{\mapsto} {#2}}
\newcommand{\bn}{{\bar{n}}}

\newcommand{\cC}{\mathcal{C}}
\newcommand{\cD}{\mathcal{D}}
\newcommand{\cE}{\mathcal{E}}
\newcommand{\rE}{\mathrm{E}}
\newcommand{\cF}{\mathcal{F}}
\newcommand{\cH}{\mathcal{H}}
\newcommand{\cQ}{\mathcal{Q}}
\newcommand{\E}{\mathbb{E}}

\newcommand{\NW}{\mathrm{NW}}
\newcommand{\TR}{\mathrm{TR}}


\newtheorem{theorem}{Theorem}[section]
\newtheorem{definition}{Definition}[section]
\newtheorem{lemma}{Lemma}[section]
\newtheorem{claim}{Claim}[section]
\newtheorem{proposition}{Proposition}[section]
\newtheorem{corollary}{Corollary}[section]
\newtheorem{example}{Example}[section]
\newtheorem{fact}{Fact}[section]

\begin{abstract}
Some, but not all, extractors resist adversaries with limited quantum storage. In this paper we show that Trevisan's extractor has this property, thereby showing an extractor against quantum storage with logarithmic seed length.
\end{abstract}

\section{Introduction}
\label{sec:introduction}
In the classical \emph{privacy amplification} problem Alice and Bob share information that is only partially secret towards an eavesdropper Charlie. Their goal is to distill this information to a shorter string that is completely secret. The problem was introduced in \cite{BBR88,BBCM95}. The classical privacy amplification problem can be solved almost optimally using extractors.\footnote{Extractors are defined in Section \ref{sec:extractors-against-quantum-storage}.}

An interesting variant of the problem, where the eavesdropper Charlie is allowed to keep quantum information, was introduced by Konig, Maurer and Renner \cite{KMR03,KMR05}. Let us call such an extractor \emph{an extractor against quantum storage}.\footnote{A formal definition is given in Section \ref{sec:extractors-against-quantum-storage}.} This situation naturally occurs in analyzing the security of some quantum key distribution (QKD) protocols and in bounded-storage cryptography. For example, \cite{CRE04} show a generic way of using extractors against quantum storage to prove the security of certain QKD protocols. Using extractors for bounded-storage cryptography demands more from the extractor (it should be "locally computable"), but also allows more specific assumptions about the source distribution (e.g., \cite{KT08} and \cite{KR07}).

Special cases of the problem are also of great interest. The first such example appears in \cite{ANTV99,N99,ANTV02} where random access codes are studied. Alice and Bob share a random length $n$ string $x$ on which the eavesdropper Charlie knows $b$ bits of information. If Charlie is classical, then choosing a random $i \in [n]$ and outputting $x_i$ results in an almost uniform bit. The question studied in the above papers is wether the same also holds when Charlie is quantum and may hold $b$ \emph{quantum bits}. It was shown in \cite{ANTV99,N99,ANTV02} that the answer is positive, and this gives an extractor against quantum storage, albeit, with a \emph{single} output bit.

Konig, Maurer and Renner \cite{KMR03,KMR05} show that the pair-wise independent extractor of \cite{ILL89} is also good (and with the same parameters) against quantum storage. Using the same techniques the result can also be extended to using almost pair-wise independence \cite{SZ99,GW97}. Another classical extractor for very high min-entropies was shown to hold against quantum storage in \cite{FS07} (the classical version appears, e.g., in \cite{DS05}). Konig and Terhal \cite{KT08} showed that any single output extractor is also good against quantum storage. They also showed that any extractor with error $\eps$, has at most $2^{O(b)}\eps$ error against $b$ quantum storage. Thus, if some extractor has a good dependence on the error (as is often the case) one can make the extractor good against $b$ quantum storage by taking a longer seed (often, longer by only $O(b)$ bits).

It is tempting to conjecture that every extractor against classical storage should also be good against quantum storage. However, Gavinsky et. al. \cite{GKKRW07} show an example of an extractor that works well against classical storage but fails even against much shorter quantum storage.

To summarize, many techniques and constructions generalize and work well against quantum storage. Yet, in spite of much effort, none of the above methods give a short seed extractor against quantum storage. \cite{KMR03,KMR05} have seed length $\Omega(n)$ and the variant with almost pair-wise independence has seed length $\Omega(m)$, where $n$ is the length of $x$ and $m$ is the output length. \cite{FS07} requires the seed length to be $\Omega(b)$ where $b$ is the bound on the quantum storage. \cite{KT08} show any single output bit extractor is good against quantum storage, and for $m$ bits their method gives $m \log n$ seed length. Alternatively, they show one can do with $O(\log n+b)$ seed length, which is again not applicable if $b$ is relatively large (say, super-polynomial). In contrast, classically, there are many explicit constructions with poly-logarithmic seed length, some even with logarithmic seed length. Some of these constructions are summarized in Table \ref{table:classical-extractors}. A natural question that repeatedly appears in the above mentioned papers is whether one can show a logarithmic seed length extractor against quantum storage.

In this work we show that Trevisan's extractor \cite{T01} is also good against quantum storage, with somewhat weaker parameters.

\begin{theorem}
\label{thm:main}
There exists a constant $c>1$, such that for every $k,b<n$ and $\eps>0$ there exists an explicit $(k,b,\eps)$ strong extractor $E:\B^n \times \B^t \to \B^m$ against $b$ quantum storage, with seed length $t=O({\log^2 n \over \log m})$ and output length
$m=\Omega({\eps \over \log n} ({k \over b})^{1/c})$.
%
%
%
%
%
\footnote{The constant $c$ we currently achieve is $c=15$.}
\end{theorem}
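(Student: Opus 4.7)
My plan is to adapt Trevisan's classical reconstruction paradigm to the quantum storage setting. Recall that Trevisan's extractor has the form
\[ E(x,y) = (\bar{C}(x)_{y|_{S_1}}, \ldots, \bar{C}(x)_{y|_{S_m}}), \]
where $\bar{C}:\{0,1\}^n\to\{0,1\}^{\bar{n}}$ is a list-decodable binary code and $\{S_1,\ldots,S_m\}$ is a combinatorial design on $[t]$ with pairwise intersections bounded by some parameter $r$. Classically one argues by contradiction: a distinguisher between $E(X,Y)$ and $U_m$ is converted, via a Yao-style hybrid argument, into a next-bit predictor for some coordinate $i$; using the small intersections of the design, the previous $i-1$ output bits are encoded as classical advice (the truth tables of those bits as functions of $y|_{S_i\cap S_j}$) of total length $O(m\cdot 2^r)$; and the resulting bit-predictor for $\bar{C}(x)_z$ on uniform $z$ is fed into list-decoding of $\bar{C}$ to reconstruct $x$, contradicting the min-entropy lower bound on $X$.

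The first two steps transfer to the quantum setting with only cosmetic changes. Starting from a POVM that distinguishes $(Y,E(X,Y),\rho_X)$ from $(Y,U_m,\rho_X)$ with advantage $\eps$, a quantum hybrid argument produces an index $i$ and a quantum measurement that, given the seed, the previous $i-1$ output bits, and the $b$-qubit state $\rho_X$, predicts the $i$-th output bit with advantage $\Omega(\eps/m)$; the design-based encoding of the previous bits into classical advice is unaffected by the presence of $\rho_X$. The outcome is a quantum bit-predictor for the 1-bit map $(x,z)\mapsto \bar{C}(x)_z$ on uniform $z\in\{0,1\}^{\log\bar{n}}$, which uses $O(m\cdot 2^r)$ bits of classical advice together with the state $\rho_X$.

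The main obstacle, and the source of the weaker parameters compared to the classical case, is the final step: turning this quantum bit-predictor into a classical reconstruction of $x$. In the classical analysis the predictor is fed straight into list decoding of $\bar{C}$; in the quantum setting the state $\rho_X$ is consumed upon measurement and cannot be re-used for the many queries list decoding makes. My plan is to invoke the Konig--Terhal reduction, which shows that any classically $\eps_0$-secure 1-bit extractor remains secure against $b$ qubits of storage with error at most $2^{O(b)}\eps_0$; its contrapositive trades the quantum state for approximately $b$ bits of classical advice at a $2^{O(b)}$ loss in distinguishing advantage. The resulting classical bit-predictor is then fed into list-decoding of $\bar{C}$ in the standard way. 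Combining the reconstruction-length bound with the min-entropy of $X$ yields, after balancing the design parameter $r=O(\log n/\log m)$ against the $2^{O(b)}$ Konig--Terhal loss, the claimed output length $m=\Omega\bigl(\eps(k/b)^{1/c}/\log n\bigr)$ for some absolute constant $c$; the seed length $t=O(\log^2 n/\log m)$ comes directly from the combinatorial design used in Trevisan's construction.
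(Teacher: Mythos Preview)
Your first two steps are fine and match the paper: the quantum hybrid argument and the design-based classical advice go through unchanged, producing for every bad $x$ a quantum bit-predictor that, given the seed $z$, the advice $adv(x)$ of $O(m\cdot 2^r)$ bits, and the $b$-qubit state $\rho(x)$, outputs $\bar C(x)_z$ with advantage $\Omega(\eps/m)$.

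The gap is in your third step. The K\"onig--Terhal statement you invoke is an \emph{extractor-level} statement: it says that if the one-bit map $(x,z)\mapsto\bar C(x)_z$ is $\eps_0$-secure as a classical extractor on some high-min-entropy source, then it remains $2^{O(b)}\eps_0$-secure against $b$ qubits of storage. Its contrapositive tells you that the \emph{source} must have low min-entropy, but it does \emph{not} hand you, for each individual bad $x$, $O(b)$ bits of classical advice together with a classical predictor of advantage $\eps/(m\,2^{O(b)})$. A density matrix on $b$ qubits has $2^{2b}$ real parameters, and there is no known way to compress it to $O(b)$ classical bits while preserving a per-$x$ prediction guarantee; the KT proof does not do this. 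Without such a per-$x$ conversion you cannot feed ``the resulting classical bit-predictor'' into ordinary list decoding, because list decoding queries the predictor at many positions and each query would consume a fresh copy of $\rho(x)$.

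The paper handles exactly this obstacle, but by a different route than KT. It replaces the code $\bar C$ by a \emph{locally list-decodable} code (the probabilistic-oracle version of the Sudan--Trevisan--Vadhan code), so that the full reconstruction of any bit of $x$ makes only $q=\poly(\log n,\,m/\eps)$ queries to the bit-predictor. One then \emph{allows} the reconstruction to use $q$ fresh copies of $\rho(x)$, i.e.\ $qb$ qubits in total; since the copies are in product state the query answers are independent and the classical analysis applies verbatim. The upshot is that the bad set $\cF$ admits a quantum \emph{random access code} of length $a+qb$ with worst-case success $1-\delta$, and a new lower bound for random access codes on \emph{subsets} (the paper's Theorem~\ref{thm:random-access-codes}) gives
\[
a+qb \;\ge\; \Omega\!\Bigl(\tfrac{\log(1/4\delta)}{\log n}\,\log|\cF|\Bigr),
\]
which is what produces the $\tfrac{1}{\log n}$ loss and the polynomial dependence $(k/b)^{1/c}$ in the output length. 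Both ingredients---local list decoding to keep $q$ small, and the subset-RAC lower bound to control $|\cF|$---are essential and are absent from your plan.
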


Plugging $k=n$ which is the usual setting for privacy amplification, we get:

\begin{corollary}
\label{cor:main}
For the above constant $c$, for every $\beta < 1, \gamma < {1-\beta \over c}$ there exists an explicit $(n,b=n^\beta,\eps=n^{-\gamma})$ strong extractor $E:\B^n \times \B^t \to \B^m$ against quantum storage, with output length $n^{\Omega(1)}$ and seed length $t=O(\log n)$.
\end{corollary}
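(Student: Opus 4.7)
The proof is a direct specialization of Theorem~\ref{thm:main} to the privacy-amplification regime where the source has essentially full min-entropy. The plan is to substitute $k=n$, $b=n^{\beta}$, and $\eps=n^{-\gamma}$, verify that the output length supplied by the theorem is still polynomial in $n$ under the hypothesis $\gamma<(1-\beta)/c$, and then observe that the seed-length bound $O(\log^2 n/\log m)$ collapses to $O(\log n)$.

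First I would substitute into the lower bound on $m$. Letting $\delta=(1-\beta)/c-\gamma$, which is a strictly positive constant by hypothesis, this yields
\[
m=\Omega\!\left(\frac{\eps}{\log n}\Bigl(\frac{k}{b}\Bigr)^{1/c}\right)=\Omega\!\left(\frac{n^{-\gamma}}{\log n}\cdot n^{(1-\beta)/c}\right)=\Omega\!\left(\frac{n^{\delta}}{\log n}\right),
\]
which is $n^{\Omega(1)}$, since the polylogarithmic denominator is absorbed into a slightly smaller but still positive exponent.

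Feeding $\log m=\Theta(\log n)$ back into the seed-length expression of Theorem~\ref{thm:main} then gives
\[
t=O\!\left(\frac{\log^2 n}{\log m}\right)=O(\log n),
\]
which is exactly the bound claimed in the corollary. Note that the upper bound $\log m=O(\log n)$ needed here is automatic, since an extractor's output cannot exceed $k+t=O(n)$ bits.

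There is no genuine technical obstacle; the content is purely in the quantitative trade-off, and the hypothesis $\gamma<(1-\beta)/c$ is precisely the threshold at which Trevisan's extractor against quantum storage still produces a polynomially long output. At or beyond this threshold the output length drops below $n^{\Omega(1)}$ and the seed length exceeds $O(\log n)$, so the corollary essentially records the widest parameter window in which Theorem~\ref{thm:main} yields both polynomial output length and a logarithmic seed.
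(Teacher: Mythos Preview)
Your proof is correct and matches the paper's approach exactly: the paper simply states that the corollary follows by ``plugging $k=n$'' into Theorem~\ref{thm:main}, and your substitution and verification of the output-length and seed-length bounds carry this out in full.
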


The seed length is $O(\log n)$ and matches classical extractor's lower bound up to constant multiplicative factors. The error $\eps$ is not that good, as it can not get below, e.g., $1/k$. The number of extracted bits is $n^{\Omega(1)}$. This should be compared with $n^{1-\zeta}$ for $\zeta$ arbitrarily small, in Trevisan's extractor against classical storage. Thus we have a polynomial loss here compared to the original classical scheme.

Table \ref{table:classical-extractors} summarizes the parameters of the known classical extractors against quantum storage. Our work gives the first solution to the privacy amplification problem against quantum storage with logarithmic seed length. We believe that other extractor constructions should also be good against quantum storage.

{\bf The technique.} One way to view Trevisan's extractor is as follows. We already said a random access code is a classical extractor outputting a \emph{single} bit. One can take $m$ \emph{independent} copies of this extractor and get an extractor outputting $m$ bits. The price of this is that the seed length becomes $\Omega(m)$. To fix this, in Trevisan's extractor a short seed of length $O(\log n)$ is used to create $m$ sets that are \emph{pair-wise nearly-disjoint}. The analysis shows that in the classical setting the $m$ nearly-disjoint sets can replace the $m$ independent sets, resulting with $m$ output bits but only $O(\log n)$ seed length.

Can this also work against quantum storage?
Anbainis et al. \cite{ANTV02} show a random access code is a single-output extractor against quantum storage. Konig and Terhal \cite{KT08} show taking $m$ \emph{independent} copies of this extractor is good against quantum storage. What about the derandomized version with pair-wise nearly-disjoint sets? Is it also good against quantum storage?

The analysis of Trevisan's extractor uses the fact that it is built upon a \emph{reconstructible} pseudo-random generator (PRG). Loosely speaking, in such structures any mechanism that breaks the extractor (i.e., distinguishes its output $E(x,U)$ from uniform) can be used together with a short advice to reconstruct its input $x$. This kind of reasoning looks well suited to generalizations to extractors against quantum storage. Assume Charlie can distinguish the extractor output $E(x,U)$ from uniform using $b$ qubits of storage. Then, the reconstruction property tells us we should be able to reconstruct $x$ using Charlie's reconstruction procedure, his $b$ qubits of information and a short advice of $a$ classical bits. Thus, we can reconstruct $x \in \B^n$ using only $a+b$ qubits. Basic Quantum information theory tells us then that $a+b \ge n$, or putting it differently, whenever $b < n-a$, we output uniform bits.

A fundamental problem that arises in the proof is that quantum advice is fragile, and using it once degrades it. This is exactly the main problem dealt with in \cite{ANTV99,N99,ANTV02}. Simplifying things, this problem forces the reconstruction algorithm to making only few queries to Charlie. Thus, a key ingredient in our solution is replacing the error correcting codes used in Trevisan's extractor with locally list-decodable codes (see Section \ref{sec:soft-decision-local-list-decoding}). Another problem is that the analysis requires random access codes of \emph{subsets}. We explain the technical  problems we encounter and their solution (and the way this affects the parameters) in detail in the technical sections.

\hspace*{-2cm}
\begin{table}
\begin{tabular}{|l|l|l|l|}\hline
no. of  truly&
no. of &
Against classical storage & Against quantum storage\\
random bits &
output bits & & \\
\hline\hline
 $O(n)$ &
$m=n-b-O(1)$ & Pair-wise independence, \cite{ILL89} & \checkmark \cite{KMR03} \\
$O(b+\log n)$ &
$m=n-b-O(1)$ & Fourier analysis, collision \cite{DS05} & \checkmark \cite{FS07} \\
 $\Theta(m)$ &
$m \le n-b-O(1)$ & Almost pair-wise ind., \cite{SZ99,GW97} & \checkmark, based on  \cite{KMR03}\\
 $O({\log^2n \over \log (n-b)})$ & $(n-b)^{1-\alpha}$ &
Designs, \cite{T01} & \checkmark, This paper. $m \approx {\eps \over \log n}({n-b \over b})^{\Omega(1)}$\\
$O(\log n)$ & $m=\Omega(n-b)$ & \cite{LRVW03,GUV07,DW08} & ?
\\
\hline\hline
\end{tabular}
\caption{Milestones in building explicit strong extractors against $b$ storage, in the classical and quantum setting. The error $\epsilon$ is a constant.}
\label{table:classical-extractors}
\end{table}

\section{Preliminaries}
\label{sec:preliminaries}

We begin with some standard notation. A distribution $D$ on $\Lambda$ is a function $D:\Lambda \to [0,1]$ such that $\sum_{a \in \Lambda} D(a)=1$. $x \in D$ denotes sampling according to the distribution $D$. $U_t$ denotes the uniform distribution over $\B^t$. We measure distance between two distributions with the variational distance $d(D_1,D_2)=\half |D_1-D_2|_1 = \half \sum_{a \in \Lambda} |D_1(a)-D_2(a)| = \max_{S \subseteq \Lambda} D_1(S)-D_2(S)$, where $D(S)=\sum_{s \in S} D(s)=\Pr_{a \in D} (a \in S)$.

The entropy of $D$ is $H(D)=\rE_{a \in D} \log(1/D(a))$. The min-entropy of $D$ is $\minent(D)= \min_{a: D(a)>0} 1/\log(D(a))$. If $\minent(D) \le k$, then for all $a$ in its support $D(a) \ge 2^{-k}$. A distribution is flat if it is uniformly distributed over its support. For flat distributions $\minent(X)=H(X)$. Every distribution $X$ with $\minent(X) \ge k$ can be expressed as a convex combination $\sum \alpha_i X_i$ of flat distributions $X_i$ each with min-entropy at least $k$.

A superposition is a vector in some Hilbert space. $\cH_{2^b}$ denotes a Hilbert space of dimension $2^b$.
A general quantum system is in a {\em mixed state\/}---a
probability distribution over superpositions. Let $\{p_i, \ket{\phi_i}\}$ denote the mixed state where
superposition~$\ket{\phi_i}$ occurs with probability~$p_i$. The
behavior of the mixed state $\set{p_i,\ket{\phi_i}}$ is completely characterized by its {\em
density matrix\/}~$\rho = \sum_i p_i \ketbra{\phi_i}{\phi_i}$ in the sense that
two mixed states with the same density matrix have the same
behavior under any physical operation. Notice that a density matrix over a Hilbert space $\cH$ belongs to $Hom(\cH,\cH)$, the set of linear transformation from $\cH$ to $\cH$. Density matrices are positive semi-definite operators and have trace $1$.

A POVM (Positive Operator Value Measure) is the most general formulation of a measurement in quantum computation. A POVM on a Hilbert space $\cH$ is a collection $\set{E_i}$ of positive semi-definite operators $E_i:Hom(\cH,\cH) \to Hom(\cH,\cH)$ that sum-up to the identity transformation, i.e., $E_i \succeq 0$ and $\sum E_i=I$. Applying a POVM $\set{E_i}$ on a density matrix $\rho$ results in answer $i$ with probability $\Tr(E_i \rho)$. 

\section{Extractors against quantum storage}
\label{sec:extractors-against-quantum-storage}

\subsection{Extractors and privacy amplification}

Alice holds a string $x$ drawn from the uniform distribution. An adversary $C$
is given some partial information about $x$ in two ways:

\begin{itemize}
\item
First, $C$ is told a small subset $X \subseteq \B^n$ from which the input $x$ is taken.
\item
Second, we let $C$ keep $b$ bits of information about $x$.
\end{itemize}

In the classical world we model the second item by two arbitrarily correlated random variables $X$ and $C$, with the constraint that $C$ is distributed over $\B^b$. In the quantum world, we say an $(n,b)$ quantum encoding is a collection $\set{\rho(x)}_{x \in \B^n}$ of density matrices $\rho(x) \in \cH_{2^b}$, and we let $C$ hold any $(n,b)$ quantum encoding of $X$.

Our goal is to find a function $E:\B^n \times \B^t \to \B^m$ such that $E(X,U_t)$, which is the distribution obtained by picking $x \in X, y \in U_t$ and outputting $E(x,y)$, "looks uniform" to the adversary $C$. We define this as follows. We say a boolean test $T$ $\epsilon$--distinguishes $D_1$ from $D_2$ if $|\Pr_{x_1 \in D_1}[T(x_1)=1] - \Pr_{x_2 \in D_2}[T(x_2)=1]| \ge \epsilon$. We say $D_1$ is $\eps$-indistinguishable from $D_2$ if no boolean POVM can $\eps$ distinguish $D_1$ from $D_2$. We define:

\begin{definition}
A function $E:\B^n \times \B^t \to \B^m$ is a $(k,b,\eps)$ strong extractor against quantum storage,
if for any distribution $X \subseteq \B^n$ with $\minent(X) \ge k$ and every $(n,b)$ quantum encoding $\set{\rho(x)}$,
$U_t \circ E(X,U_t) \circ \rho(X)$ is $\eps$-indistinguishable from $U_{t+m} \circ \rho(X)$.\footnote{$U_t \circ E(X,U_t) \circ \rho(X)$ denotes the mixed state obtained by sampling $x \in X, y \in \B^t$ and outputting $\ket{y,E(x,y)} \tensor \rho(x)$. Similarly, $U_{t+m} \times \rho(X)$ denotes the mixed state obtained by sampling $w \in \B^{t+m}, x \in X$ and outputting $\ket{w} \tensor \rho(x)$.}
\end{definition}

In the definition we could have replaced the condition "for any distribution $X \subseteq \B^n$ with $\minent(X) \ge k$" with the condition "for any \emph{flat} distribution $X \subseteq \B^n$ with $\minent(X) \ge k$", as any distribution  $X \subseteq \B^n$ with $\minent(X) \ge k$ can be expressed as a convex combination of flat distributions with min-entropy $k$.

We similarly define a $(k,b,\eps)$ strong extractor against \emph{classical} storage, where we allow the adversary $C$ two types of information: first we tell $C$ that $x$ is drawn from a small subset $X \subseteq \B^n$, and second, we let $C$ store $b$ bits of information about $x$. However, classically, these two types of information are redundant. Formally,

\begin{lemma}
Let $E:\B^n \times \B^t \to \B^m$. Let $k \ge b \ge 0$ and $\eps \ge 0$.
If $E$ is a $(k-b-\logeps,\eps)$ strong extractor then $E$ is a $(k,b,2\eps)$ strong extractor against classical storage.
\end{lemma}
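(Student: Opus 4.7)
The plan is to reduce the classical-storage setting to the standard (no-storage) strong extractor definition by conditioning on the value of the adversary's $b$ classical bits. Without loss of generality I would take the storage to be a deterministic function $f:\B^n\to\B^b$ (a randomized adversary can be derandomized by fixing its coins to the best value), so the goal becomes showing that for any $X$ with $\minent(X)\ge k$ the joint distribution $U_t\circ E(X,U_t)\circ f(X)$ is $2\eps$-close to $U_{t+m}\circ f(X)$.

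The key step is a standard min-entropy splitting argument applied to the conditional distributions $X_c:=(X\mid f(X)=c)$. I would call $c\in\B^b$ \emph{heavy} if $\Pr[f(X)=c]\ge\eps\cdot 2^{-b}$ and \emph{light} otherwise. Since there are at most $2^b$ light values, their total probability mass is at most $\eps$. For any heavy $c$ and any $x\in\B^n$,
\[
\Pr[X=x\mid f(X)=c]\;=\;\frac{\Pr[X=x]}{\Pr[f(X)=c]}\;\le\;\frac{2^{-k}}{\eps\cdot 2^{-b}}\;=\;2^{-(k-b-\logeps)},
\]
so $\minent(X_c)\ge k-b-\logeps$.

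Then I would invoke the hypothesis: because $E$ is a $(k-b-\logeps,\eps)$ strong extractor, $U_t\circ E(X_c,U_t)$ is $\eps$-close to $U_{t+m}$ for every heavy $c$. To combine these bounds into a single statement, decompose the variational distance over the value of $f(X)$:
\[
d\bigl(U_t\circ E(X,U_t)\circ f(X),\, U_{t+m}\circ f(X)\bigr)\;=\;\sum_{c}\Pr[f(X)=c]\cdot d\bigl(U_t\circ E(X_c,U_t),\,U_{t+m}\bigr).
\]
The contribution from heavy $c$ is at most $\eps\cdot\sum_{c}\Pr[f(X)=c]\le\eps$, and the contribution from light $c$ is trivially bounded by the total light mass, which is also at most $\eps$. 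Summing gives $2\eps$, as required.

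The argument is routine and there is no real obstacle: beyond the averaging over $c$, the only content is the observation that $b$ bits of side information degrade min-entropy by at most $b$ bits once a $\logeps$ slack is spent to throw away a negligible "light tail" of conditioning events. This is exactly the clean reduction that fails in the quantum case, since the conditional distributions $X_c$ have no direct analog when the storage is an arbitrary quantum state $\rho(x)\in\cH_{2^b}$; that failure is what motivates the more delicate reconstruction-based analysis in the rest of the paper.
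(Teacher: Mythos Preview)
Your proposal is correct and follows essentially the same argument as the paper: split the values $c$ of the classical storage into ``heavy'' ($\Pr[C=c]\ge\eps 2^{-b}$) and ``light'', observe that heavy conditionings leave min-entropy at least $k-b-\logeps$ so the extractor applies, and bound the light tail by $\eps$. Your write-up is simply more explicit (the derandomization step, the variational-distance decomposition, the pointwise min-entropy calculation) than the paper's terse version, which also begins by reducing to flat $X$; the underlying idea is identical.
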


\begin{proof}
Let $X$ be a flat distribution over $2^k$ elements.
Assume $C$ keeps $b$ bits of information. Except for probability $\eps$, $C$ gets a value $c$ such that $\Pr[C=c]\ge \eps 2^{-b}$ and so $\minent(X|C=c) \ge k-b-\logeps$ and therefore $(U_t \circ E(X,U_t)~|~C=c)$ is $\eps$ close to uniform. Thus $E$ is a $(k,b,2\eps)$ strong extractor against classical storage.
\end{proof}

A $(k-b-\logeps,\eps)$ extractor is not necessarily a $(k,b,2\eps)$ strong extractor against \emph{quantum} storage. One formal reason is that it is not clear how to define the conditional distribution $(X|C=\rho)$ when $C$ may be quantum. Renner \cite{R05} defines \emph{smooth min-entropy} for this case, but still it is not clear how to define the marginal distribution itself as it depends on which measurement $C$ chooses to take later.

Another way to look at the problem is as follows. In the classical world, $C$ has to first choose $c$ bits of information about $x$ which already determines a distribution $(X|C=c)$, and only then an independent random seed $y \in \B^t$ is chosen and $E(x,y)$ is calculated. In the quantum world, however, things are not that simple. $C$ first chooses $c$ qubits of information about $x$. This by itself does not determine any classical distribution $X$ on $\B^n$. Next, an independent random seed $y \in \B^t$ is chosen and $E(x,y)$ is calculated. Finally, $C$ may choose which measurement to make based on $x$ and $y$.
The problem is that it may be possible for $C$ to make a measurement that will correlate the distribution $X$ with the seed $y$, making the extractor useless. This point of view is further explained in \cite{KT08}.

\subsection{Random access codes}

A similar problem to the one above appears in \emph{random access codes}. We now explain what random access codes are, as this will turn out to be a basic building block in our result. A fundamental result in quantum information theory, Holevo's theorem~\cite{H73}, states that no more than $b$ classical bits of information can be faithfully transmitted by transferring $b$ quantum bits from one party to another. Formally,

\begin{theorem} (Holevo)
Let $\set{\rho(x)}$ be any $(n,b)$ quantum encoding. Let $X$ be a random variable with distribution
$\set{p_x}$ and let $\rho(X) = \rE_x \rho(x) = \sum_x p_x \rho_x$.
If $Y$ is any random variable obtained by performing a measurement on
the encoding, then $I(X:Y) \le  S(\rho(X)) - \rE_x S(\rho_x) \le S(\rho(X))$.
\end{theorem}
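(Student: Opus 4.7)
The plan is to identify the Holevo quantity $\chi = S(\rho(X)) - \rE_x S(\rho(x))$ with a quantum mutual information of a classical-quantum state, and then invoke monotonicity of mutual information under quantum channels. Introduce an auxiliary classical register $\tilde X$ holding a copy of the message, and consider
\[\sigma_{\tilde X B} \;=\; \sum_x p_x\, \ketbra{x}{x}_{\tilde X} \tensor \rho(x)_B.\]
Because $\sigma_{\tilde X B}$ is block-diagonal along $\tilde X$, a direct calculation from $S(\sigma) = -\Tr(\sigma \log \sigma)$ yields
\[S(\sigma_{\tilde X B}) = H(X) + \rE_x S(\rho(x)), \quad S(\sigma_{\tilde X}) = H(X), \quad S(\sigma_B) = S(\rho(X)),\]
so the quantum mutual information is exactly
\[I(\tilde X : B)_\sigma \;=\; S(\sigma_{\tilde X}) + S(\sigma_B) - S(\sigma_{\tilde X B}) \;=\; S(\rho(X)) - \rE_x S(\rho(x)) \;=\; \chi.\]

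Next, the measurement yielding $Y$ is a CPTP map acting on the $B$ register alone: applying a POVM $\set{E_y}$ and recording the outcome in a classical register replaces $\rho(x)_B$ by $\sum_y \Tr(E_y \rho(x)) \ketbra{y}{y}_Y$, giving
\[\tau_{\tilde X Y} \;=\; \sum_{x,y} p_x \Tr(E_y \rho(x))\, \ketbra{x}{x}_{\tilde X} \tensor \ketbra{y}{y}_Y.\]
Since $\tau_{\tilde X Y}$ is a classical-classical state, its quantum mutual information coincides with the ordinary classical $I(X:Y)$. Because $\tau$ is obtained from $\sigma$ by a quantum operation on $B$ only, monotonicity of mutual information under local channels gives
\[I(X:Y) \;=\; I(\tilde X : Y)_\tau \;\le\; I(\tilde X : B)_\sigma \;=\; \chi,\]
which is the first inequality. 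The second inequality $\chi \le S(\rho(X))$ is immediate from the non-negativity of von Neumann entropy, $\rE_x S(\rho(x)) \ge 0$.

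The only step with genuine quantum content is monotonicity of mutual information under CPTP maps, which is equivalent to strong subadditivity of von Neumann entropy (Lieb--Ruskai). I would quote that as a black box rather than reprove it; everything else reduces to the explicit entropy formula for classical-quantum states plus elementary algebra. So the expected obstacle is not any calculation here but rather the background fact of strong subadditivity, without which the classical-to-quantum data-processing step is not available.
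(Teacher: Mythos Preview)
Your argument is correct and is the standard modern derivation of Holevo's bound: realize the Holevo quantity as the quantum mutual information of the classical-quantum state, and then apply the data-processing inequality (equivalently, monotonicity of mutual information under local CPTP maps, which rests on strong subadditivity).

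However, there is nothing to compare against: the paper does not supply its own proof of this theorem. It is stated as a background result attributed to Holevo \cite{H73} and used as a black box (specifically in the proof of Theorem~\ref{thm:random-access-codes}). So your proposal goes beyond what the paper does; the paper simply quotes the statement.
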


In view of this result, it is tempting to conclude that the exponentially many degrees of freedom latent in the description of a quantum system must necessarily stay hidden or inaccessible. However, the situation is more subtle since the recipient of the~$n$ qubit quantum state has a choice of measurement he can make to extract
information about their state. In general, these measurements do not commute. Thus making a particular measurement will disturb the system, thereby destroying some or all the information that would have been revealed by another possible measurement. Indeed, Ambainis et. al. \cite{ANTV99} ask whether there exists an $(n,b)$ quantum encoding $\set{\rho(x)}$ such that the recipient can learn any bit $x_i$ of his choice. I.e., they define:

\begin{definition}\cite{ANTV02}
A $\enc{n}{t}{p}$ quantum {\em random access\/} encoding is an $(n,t)$ encoding $\set{\rho(x)}_{x \in \B^n}$ such that for every~$1 \le i \le n$,
there is a POVM $\cE^i=\set{\cE^i_0,\cE^i_1}$ (i.e., $\cE^i_0+\cE^i_1=I, \cE^i_j \succeq 0$)
such that for all $x \in \{0,1\}^n$ we have
$\Tr(\cE^i_{x_i} f(x)) \ge p$.
\end{definition}

\cite{N99,ANTV02} show that any quantum $\enc{n}{t}{p}$ encoding must have $t \ge (1-H(p))n$. In fact, this lower bound also holds if we relax the worst-case condition
$\forall_x \forall_i \Tr(\cE^i_{x_i} f(x)) \ge p$ and replace it with the average-case condition $\forall_x \E_i \Tr(\cE^i_{x_i} f(x)) \ge p$.

In this paper we need random access codes that are defined for \emph{subsets} of $\B^n$. Namely,

\begin{definition}
Let $\cF \subseteq \B^n$. A $\enc{\cF}{t}{p}$ quantum {\em random access\/} encoding is an $(n,t)$ encoding $\set{\rho(x)}_{x \in \cF}$ such that for every~$1 \le i \le n$,
there is a POVM $\cE^i=\set{\cE^i_0,\cE^i_1}$ (i.e., $\cE^i_0+\cE^i_1=I, \cE^i_j \succeq 0$)
such that for all $x \in \cF, i \in [n]$ we have
$\Tr(\cE^i_{x_i} f(x)) \ge p$.
\end{definition}

We prove:

\begin{theorem}
\label{thm:random-access-codes}
Let $\delta \ge 0$, $\cF \subseteq \B^n$.

\begin{enumerate}
\item
Any quantum $\enc{\cF}{t}{\half+\delta}$ encoding satisfies~$t \ge \Omega({\delta^2 \over \log n} \cdot \log |\cF|)$.
\item
\label{thm:random-access-codes:item-2}
Any quantum $\enc{\cF}{t}{1-\delta}$ encoding satisfies~$t \ge \Omega({\log(1/4 \delta) \over \log n} \cdot \log |\cF|)$.
\end{enumerate}
\end{theorem}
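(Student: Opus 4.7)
The strategy is to reduce the subset random access code problem to the standard full-cube case via a shattering argument, and then invoke the known random access code lower bounds for the full cube.

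First, I would use the Sauer--Shelah lemma. Write $k = \log|\cF|$. Since $\sum_{i \le s}\binom{n}{i} \le (en/s)^s$, the inequality $|\cF| > (en/s)^s$ forces the VC-dimension of $\cF$ to exceed $s$, so some set of size $s+1$ is shattered by $\cF$. Choosing $s = \Theta(k/\log n)$ makes $s\log(en/s) \le k$, hence $|\cF|$ is large enough to satisfy the Sauer--Shelah hypothesis, and we obtain a coordinate set $S \subseteq [n]$ of size $s = \Omega(\log|\cF|/\log n)$ whose projection $\pi_S(\cF)$ equals all of $\{0,1\}^S$.

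Next, I would restrict the encoding. For each $y \in \{0,1\}^S$ pick any preimage $x_y \in \cF$ agreeing with $y$ on $S$, and set $\sigma(y) := \rho(x_y)$. The family $\{\sigma(y)\}_{y \in \{0,1\}^S}$ is an $(s,t)$-encoding of $\{0,1\}^S$, and for each $i \in S$ the POVM $\cE^i$ supplied with the $\cF$-encoding still satisfies $\Tr(\cE^i_{y_i}\sigma(y)) = \Tr(\cE^i_{(x_y)_i}\rho(x_y)) \ge p$. So $\sigma$ is a standard $\enc{s}{t}{p}$ quantum random access code on the full cube $\{0,1\}^s$, and the proof is complete once the known full-cube bound is plugged in. For part~1, with $p = 1/2+\delta$, Nayak's bound gives $t \ge (1-H(1/2+\delta))s = \Omega(\delta^2 s) = \Omega(\delta^2 \log|\cF|/\log n)$. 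For part~2, with $p = 1-\delta$, the naive $(1-H(\delta))s$ bound only yields $\Omega(s)$, which is short by a factor of $\log(1/\delta)$.

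To recover the correct $\log(1/4\delta)$ dependence in part~2, I would group the $s$ shattered coordinates into blocks of size $\ell = \Theta(\log(1/4\delta))$ and use the quantum union (gentle measurement) bound to jointly recover each block from a single copy of the encoded state with constant probability; this turns $\sigma$ into a random access code over a $2^\ell$-symbol alphabet with $s/\ell$ positions, from which the sharper large-alphabet bound extracts the claimed $\log(1/4\delta)$ factor. The main obstacle is precisely this last step: the Sauer--Shelah reduction and the restriction are routine, but the vanilla Nayak bound does not carry the right $\delta$-dependence in the high-recovery regime, so the high-probability case has to be handled by amplifying via block measurements and carefully controlling the disturbance accumulated while extracting $\ell$ bits from the same quantum register.
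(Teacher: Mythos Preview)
Your argument for item~1 is correct and is genuinely different from the paper's. The paper never reduces to a full cube; instead it amplifies the $\enc{\cF}{t}{\frac12+\delta}$ encoding to a $\enc{\cF}{O(tT)}{1-\eps}$ encoding with $T=O(\delta^{-2}\log n)$ copies and $\eps=\Theta(n^{-2})$, then applies all $n$ measurements $\cE^1,\ldots,\cE^n$ sequentially (using the gentle-measurement lemma of \cite{ANTV99}) to recover the whole string $f$ with constant probability, and finishes with Holevo: $Tt\ge\Omega(\log|\cF|)$. Your Sauer--Shelah reduction to a full cube of dimension $s=\Theta(\log|\cF|/\log n)$ followed by Nayak's bound $t\ge(1-H(\tfrac12+\delta))s=\Omega(\delta^2 s)$ is shorter and avoids amplification altogether; the paper's route, on the other hand, keeps the full family $\cF$ in play, which is what makes item~2 go through.

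For item~2 your plan breaks. Once you have passed through Sauer--Shelah you hold a genuine $\enc{s}{t}{1-\delta}$ code on the \emph{full} cube $\{0,1\}^s$, and on the full cube no bound of the form $t\ge\Omega(s\log(1/\delta))$ can possibly hold: the identity encoding with $t=s$ classical bits already achieves $\delta=0$. So any argument that stays inside the $s$-cube is capped at $t\ge s$, which is exactly Nayak's $(1-H(\delta))s$ in the limit. Your blocking step does not escape this: grouping the $s$ coordinates into $s/\ell$ blocks of size $\ell=\Theta(\log(1/4\delta))$ and recovering a block by gentle measurement yields a $\enc{s/\ell}{t}{\Omega(1)}$ code over a $2^\ell$-ary alphabet, and the large-alphabet Nayak/Fano bound gives $t\ge (s/\ell)\cdot\Omega(\ell)=\Omega(s)$ --- the $\ell$'s cancel, and you are back to $t\ge\Omega(s)=\Omega(\log|\cF|/\log n)$, short of the target by the factor $\log(1/4\delta)$.

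The paper gets that factor precisely because it does \emph{not} project to a shattered set. Starting from success $1-\delta$, only $T=O(\log n/\log(1/4\delta))$ copies are needed to reach error $\eps=\Theta(n^{-2})$; then the sequential measurement of all $n$ coordinates recovers $f\in\cF$ with constant probability, so Holevo gives $Tt\ge\Omega(\log|\cF|)$ and hence $t\ge\Omega\!\big(\tfrac{\log(1/4\delta)}{\log n}\log|\cF|\big)$. The $\log(1/4\delta)$ lives in the amplification cost $T$, not in any per-coordinate information bound, and that is exactly the piece your reduction discards.
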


\begin{proof}
We use the proof technique of \cite{ANTV99}. First, one can turn the $\enc{\cF}{t}{\half+\delta}$ encoding
into another $\enc{\cF}{O(t \times T)}{1-\eps}$ encoding, with $T=O(\logeps/\delta^2)$,
as follows. The new encoding is $T$ copies of the original encoding. The decoding is the majority vote over the $T$ decodings of the $T$ copies. By Chernoff, The probability of error is at most $\eps$.

Fix $\epsilon={c \over n^2}$ for some constant $c$ that will be fixed later.
Consider some $f \in \cF$ and its encoding $\rho=\rho(f)$. For every $i \in [n]$
the measurement $\cE^i$ recovers $f_i$ with probability at least $1-\eps$, i.e., almost with certainty. It is shown in \cite{ANTV99},\footnote{Implicit in the proof of Lemma 4.2.} that applying sequentially the measurements $\cE^1,\ldots,\cE^n$
results in a distribution $Y$ that outputs $(f_1,\ldots,f_n)$ with probability at least $1-4n\sqrt{\eps}=1-4\sqrt{c}$. Taking $c$ small enough, we recover $y$ with probability $\half$. By Holevo's theorem, $Tt \ge I(U_\cF:Y) \ge \half \log(|\cF|)$.

For the second item notice that one can turn a $\enc{\cF}{t}{1-\delta}$ encoding
into another $\enc{\cF}{O(t \times T)}{1-\eps}$ encoding, using $T=2\log_{4\delta}\eps$, and the rest is as before.
\end{proof}

Oded Regev showed us an example where the bound in Theorem \ref{thm:random-access-codes} is tight. Partition the $n$ bits to $\sqrt{n}$ blocks each of size $\sqrt{n}$. Take the set $\cF$ to be all bit strings containing exactly one 1 in each block. $\cF$ has
$\Theta(\sqrt{n} \cdot \log n)$ entropy.
Yet, consider the following RAC that uses only $O(\sqrt{n}+\log n)$ bits.
Given $f \in \cF$, with indices $i_1,\ldots,i_{\sqrt{n}}$ (i.e., index $i_j$ is $1$ in the $j$'th block) the RAC encodes $f$ by $(h,h(i_1),\ldots,h(i_k))$, where $h:[\sqrt{n}] \to [10]$ is randomly chosen from a family of pairwise independent
hash functions. When asked for a bit $t$ of the input, say, from the $j$'th block, the decoder just checks whether $h(t)=h(i_j)$. It outputs 1 if yes, otherwise 0. By the pairwise independent property, we output the correct answer with
probability $2/3$ for each question.

We proved Theorem \ref{thm:random-access-codes} with the definition that is worst-case over $i$. We remark that the average case version is false. For example, if $\cF$ is the set of all $n$ bit strings of weight at least ${2 \over 3}n$, there is a trivial random access code of length zero that for all $f \in \cF$ succeeds on average over $i$
with probability at least $2/3$. Thus, here there is a crucial difference between worst-case and average-case complexity over $i$.

\section{Local list-decoding}
\label{sec:soft-decision-local-list-decoding}


A code is a function $\cC:\Sigma^n \to \Sigma^\bn$. We identify a binary code $\cC$ with its image $\cC = \set{ \cC(x) ~|~ x \in \Sigma^n}$. The distance $d$ of the code is the minimum distance between two codewords in $\cC$. The balls of radius ${d-1 \over 2}$ around codewords are disjoint, and therefore one can uniquely correct up to so many errors. If we allow more than $d/2$ errors several decodings are possible. In many cases one can allow almost up to the distance errors and still get only few possible decodings. We say $\cC$ is $(p,L)$ list-decodable if for every $z \in \Sigma^{\bn}$ there are at most $L$ codewords $y$ such that $ag(z,y) \eqdef |\set{i \in [\bn] | z_i=y_i}| \ge p\bn$.

As always one can study the combinatorial properties of a code, or ask for an explicit decoding algorithm. If the decoding algorithm makes only few queries to the corrupted word, we say it is \emph{local}. Formally,

\begin{definition}(local list-decoding)
Let $\cC:\Sigma^n \to \Sigma^\bn$.
We say $\cC$ has a $(p,L,q,\beta)$ local list-decoding if:

\begin{itemize}

\item
$\cC$ is $(p,L)$ list-decodable.

\item
 There exists a probabilistic, polynomial time oracle machine $A$ that on input $k \in [L]$ and $i \in [n]$ outputs a value $A^*(k,i) \in \B$. $A$ can make at most $q$ queries and each query is in the range $[\bn]$.

\item
For every deterministic function $y: \Sigma^\bn \to \Sigma$ and every $x \in \Sigma^n$ such that $ag(y,\cC(x)) \ge p\bn$, there exists $k \in [L]$ such that for every $i \in [n]$, $\Pr_{A} [A^y(k,i)=x(i)] \ge \beta$.
\end{itemize}
\end{definition}

Sudan, Trevisan and Vadhan proved:

\begin{theorem}\label{thm:stv}\cite{STV01}
For every $\delta=\delta(n)>0$, there exists an explicit $[\bn,n]_2$ binary code with output length $\bn=poly(n,{1 \over \delta})$ and $poly(\bn)$ encoding time, that is $(p=\half+\delta,L=poly(\bn),q=poly(\log n,{1 \over \delta}),\beta=1-\delta)$ local list-decodable.\footnote{The code in \cite{STV01} is Reed Muller concatenated with Hadamard. The list-decoding algorithm first list-decodes the Hadamard code, and then uses the result to list-decode the Reed Muller code. As the Hadamard list decoding returns a list, it is better to use there list recovery. Working out the parameters we get field size $|F|$ that is $|F|=O({\log^2 n \over \delta^5})$.
With $|F|^3$ queries the algorithm solves the local list-decoding problem, worst-case over $i$.
We remark that using a better inner code the query complexity can be reduced.}
\end{theorem}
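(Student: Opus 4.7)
My plan is to follow the concatenated construction of \cite{STV01}: a Reed--Muller outer code combined with a Hadamard inner code, together with a two-level local list-decoder. First I would fix a field $F$ with $|F| = O(\log^2 n / \delta^5)$ and pick integers $m, d$ with $\binom{d+m}{m} \ge n$ and $d \le |F|/2$, so that an input $x \in \B^n$ corresponds to (a subset of) the coefficients of an $m$-variate degree-$d$ polynomial $p_x$ over $F$. The outer codeword is $(p_x(\alpha))_{\alpha \in F^m}$ over alphabet $F$, and the inner code is the Hadamard code $\mathrm{Had}:F \to \B^{|F|}$, viewing $F$ as $\B^{\log|F|}$. The final binary code $\cC:\B^n \to \B^\bn$ has block length $\bn = |F|^{m+1} = \poly(n,1/\delta)$ and is computable in time $\poly(\bn)$.

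For the decoder, given oracle access to $y$ with $\mathrm{ag}(y,\cC(x)) \ge \half + \delta$, an averaging argument shows that for at least a $\delta$-fraction of $\alpha \in F^m$ the Hadamard block $y(\alpha,\cdot)$ agrees with $\mathrm{Had}(p_x(\alpha))$ on a $\half + \delta/2$ fraction of coordinates. The first stage of the local decoder invokes Goldreich--Levin's Hadamard list-decoder on a queried block: with $\poly(\log|F|, 1/\delta)$ queries it returns, whenever $\alpha$ is good, a list $S_\alpha \subseteq F$ of size $\poly(1/\delta)$ containing $p_x(\alpha)$. The second stage runs STV's local list-recovery for Reed--Muller: to evaluate the $k$-th candidate polynomial at query point $i \in F^m$, sample a random low-degree curve $C$ through $i$, collect the lists $S_\alpha$ at points along $C$, and run Sudan's univariate list-decoder to reconstruct $p_x|_C$, and hence $p_x(i)$. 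Feeding $p_x(i)$ through the coefficient-extraction map yields $x_i$, and standard amplification over independent curve samples gives success probability $\ge 1-\delta$ per query.

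The combinatorial list size $L = \poly(\bn)$ would follow from the Johnson-type bound for Reed--Muller together with the $\poly(1/\delta)$ list size of Hadamard decoding, since each $p_x$ that is $(\half+\delta)$-close to $y$ is determined by a choice of one element from each $S_\alpha$ along a good curve. The main obstacle I anticipate is parameter-balancing $|F|$: it must be large enough that (i) degree-$d$ restrictions to a random curve form a code list-recoverable in the regime Sudan's algorithm handles, and (ii) the $\delta/2$ loss in agreement at the Hadamard step can be absorbed; but small enough that $|F|^{m+1}$ remains $\poly(n,1/\delta)$ and the total query count $|F|^3$ (one factor from sampling the curve, one from Goldreich--Levin per sampled $\alpha$, one from amplification) is $\poly(\log n, 1/\delta)$. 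The choice $|F| = O(\log^2 n/\delta^5)$ indicated in the footnote makes all constraints hold simultaneously, yielding the claimed $\bn = \poly(n,1/\delta)$ and $q = \poly(\log n,1/\delta)$.
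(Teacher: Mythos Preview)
The paper does not prove this theorem; it is stated as a citation of \cite{STV01}, with only the footnote sketching the construction (Reed--Muller concatenated with Hadamard, inner Hadamard list-decoding followed by outer Reed--Muller list-recovery, field size $|F|=O(\log^2 n/\delta^5)$, and $|F|^3$ queries). Your proposal is a faithful elaboration of exactly that sketch and is essentially correct as an outline of the STV argument; the one minor slip is that in the standard STV encoding the message bits are placed as \emph{evaluations} of $p_x$ on a fixed subset of $F^m$ rather than as coefficients, so that recovering $p_x(i)$ at a message point directly yields $x_i$ without any further ``coefficient-extraction'' step.
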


In our case we do not have access to a deterministic function $y: [\bn] \to \Sigma$, but rather to a probabilistic procedure that has high on average success probability.
We are given a probabilistic oracle $O:[\bn] \to \Sigma$. For $y:[\bn] \to \Sigma$ define $ag(O,y) \eqdef \Pr_{i \in [\bn],O} (O(i)=y(i))$. We would like to do local list-decoding when given access to $O$. Formally,

\begin{definition}(probabilistic oracle, local list-decoding)
Let $\cC:\Sigma^n \to \Sigma^\bn$.
We say $\cC$ has a $(p,L,q,\beta)$ \emph{probabilistic oracle}, local list-decoding if:

\begin{itemize}

\item
$\cC$ is $(p,L)$ list-decodable.

\item
 There exists a probabilistic, polynomial time oracle machine $A$ that on input $k \in [L]$ and $i \in [n]$ outputs a value $A^*(k,i) \in \B$. $A$ can make at most $q$ queries and each query is in the range $[\bn]$.

\item
For every probabilistic oracle $O: \Sigma^\bn \to \mathcal{D}$ and every $x \in \Sigma^n$ such that $ag(O,\cC(x)) \ge p\bn$, there exists $k \in [L]$ such that for every $i \in [n]$, $\Pr_{A} [A^O(k,i)=x(i)] \ge \beta$.
\end{itemize}
\end{definition}

If we are just interested in list-decoding (with no restriction on the number of queries) then list decoding a probabilistic oracle is essentially the same as list decoding a string. This is because we can take $O$, and for every query $j \in [\bn]$ sample $y_j=O(j)$. By Chernoff, with high probability, the sampled string $y$ also has high agreement with $\cC(x)$ and therefore the string $x$ appears somewhere in the output list of $y$.

The above argument does \emph{not} work for \emph{local} list-decoding. Here we need the index $k$ to depend on $O$ alone, and not on the sampled string $y$ or the index $i$. This is an essential requirement, as in local list-decoding we do not reconstruct the whole string $x$, but rather a single bit $x_i$ of it. The above argument therefore does not work, as it may happen that the index of $x$ in the list of $y$ depends on the sampled string $y$, and not just on $O$ as required by the definition.

Luckily, going back to the construction of \cite{STV01} one can check that essentially the same analysis shows that:\footnote{This is because the advice for $x$ is a point $v$ and a value $\sigma$ such that $\wh{x}(v)=\sigma$, were $\wh{x}$ is the low-degree extension of $x$, and with high probability such an advice separates for \emph{most} of the sampled strings $y$, the true codeword $\cC(x)$ from the other codewords that arise from $y$.}


\begin{theorem}\label{thm:stv:probabilistic-oracle}(based on \cite{STV01})
For every $\delta=\delta(n)>0$, there exists an explicit $[\bn,n]_2$ binary code with output length $\bn=poly(n,{1 \over \delta})$ and $poly(\bn)$ encoding time, that is $(p=\half+\delta,L=poly(\bn),q=poly(\log n,{1 \over \delta}),\beta=1-\delta)$ probabilistic oracle, local list-decodable.
\end{theorem}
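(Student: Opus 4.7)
The plan is to use exactly the same code and decoder as in Theorem \ref{thm:stv} and to argue that the STV analysis extends to probabilistic oracles essentially verbatim, because the advice that distinguishes $x$ inside the list depends only on $x$ and not on the oracle. Recall that in \cite{STV01} the code is Reed--Muller concatenated with Hadamard: the message $x \in \B^n$ is first encoded as the evaluation table of its low-degree extension $\wh{x}:F^m \to F$ over a field of size $|F|=O(\log^2 n/\delta^5)$, and each symbol is then concatenated with Hadamard. The local decoder for a target position $i$ picks a random line $\ell$ through the point of $F^m$ associated with $i$, queries the oracle at $\poly(\log n,1/\delta)$ points of $\ell$, runs Reed--Solomon soft list-decoding on these samples to obtain a short list of candidate univariate polynomials, and uses the advice pair $(v,\sigma) \in F^m \times F$ with $\wh{x}(v)=\sigma$ to select $\wh{x}|_\ell$ from that list.

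First I would record the crucial structural observation: the advice $(v,\sigma)$ is a function of $x$ alone, so the list index $k\in[L]$ assigned to $x$ by the STV analysis is already well defined independently of any oracle realization, which is exactly what the definition of probabilistic oracle local list-decoding demands. The $(p,L)$ list-decodability of $\cC$ is then inherited unchanged from Theorem \ref{thm:stv}.

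Next I would run the STV local decoder while answering each query by one fresh sample from $O$. Because the STV decoder queries pairwise-distinct points along the random line, each index $j \in [\bn]$ is queried at most once and the samples from $O$ are independent. Since the random line induces an almost-uniform distribution on $[\bn]$ and $ag(O,\cC(x)) \ge p\bn$, the expected fraction of sampled points on the line that agree with $\cC(x)$ is at least $p$. A Chernoff bound then gives that this fraction is at least $p-\delta/10$ with probability at least $1-\delta/3$ over the combined randomness of the decoder and of $O$. Conditioned on this event, the line-based STV analysis applies verbatim: Reed--Solomon soft list-decoding returns a list containing $\wh{x}|_\ell$, and the advice $(v,\sigma)$ separates this candidate from the others with high probability, so $A^O(k,i)=x_i$ with probability at least $1-\delta$.

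The main obstacle would otherwise have been that freezing the randomness of $O$ into a deterministic $y$ and invoking Theorem \ref{thm:stv} as a black box produces a list index $k$ that could depend on $y$, violating the definition. This is precisely what the $(v,\sigma)$-structure of the STV advice avoids, since the same $(v,\sigma)$ works simultaneously for almost all realizations of $O$: with high probability over $O$'s samples the advice picks out $\wh{x}|_\ell$ rather than some spurious polynomial arising from the noise. A union bound over the three bad events (concentration of sample agreement on the line, success of Reed--Solomon list-decoding on the line, and advice separating $\wh{x}|_\ell$ from other list elements) yields overall success probability $\beta=1-\delta$, while the query count $q=\poly(\log n,1/\delta)$ and list size $L=\poly(\bn)$ are the same as in Theorem \ref{thm:stv}.
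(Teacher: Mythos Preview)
Your proposal is correct and follows exactly the approach the paper sketches: the paper's entire argument is the footnote observation that the STV advice is a pair $(v,\sigma)$ with $\wh{x}(v)=\sigma$, which depends on $x$ (and the random choice of $v$) but not on the oracle, so the same advice separates $\cC(x)$ from the spurious list elements for most realizations of $O$. Your expansion---running the STV line-based decoder with fresh independent samples from $O$, using concentration on the line, and noting that the index $k$ is fixed before any sampling---is precisely the ``essentially the same analysis'' the paper alludes to without writing out.
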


\section{Black-box PRGs}
\label{sec:black-box}

Trevisan showed that good classical black-box PRGs give rise to good
classical extractors. In this section we show that good classical black-box PRGs with \emph{few queries} give rise to good
classical extractors against \emph{quantum storage}.



We begin with a purely classical definition:

\begin{definition}(black-box PRG)
Let $G^{f:[n] \to \B}:\B^t \to \B^m$ be a classical oracle machine with oracle calls to a function $f:[n] \to \B$. $(G^f,R)$ is a black-box $(\epsilon,p)$-PRG with $a$ advice bits and $q$ queries, if:

\begin{itemize}
\item
$R$ is a classical oracle circuit $R(adv,i)$ with inputs $adv \in
\B^a$ and $i \in [n]$. Also, $R$ makes at most $q$ queries to $T$.

\item
For every Boolean function $f: [n] \to \B$, and every
probabilistic oracle $T$ that $\epsilon$--distinguishes $U_t \circ
G^f(U_t)$ from uniform, there exists an advice
$adv=adv(T,f) \in \B^a$ such that for all $i \in [n]$, $\Pr_{R,T} [R^T(adv,i)=f(i)] \ge p$.
\end{itemize}
We call $R$ the \emph{reconstruction algorithm}.
Sometimes we omit $R$ and say $G^f$ is a black-box $(\epsilon,p)$-PRG, meaning that there exists some reconstruction algorithm such that $(G^f,R)$ is a black-box $(\epsilon,p)$-PRG.
\end{definition}

%
%

Trevisan \cite{T01} showed that black-box
pseudorandom generators give rise to extractors. We show they actually give rise to extractors against quantum storage, alas their quality depends on the number of oracle calls in the reconstruction algorithm.

\begin{proposition}\label{prop:black-box:extractor-against-quantum-storage}
(generalizing \cite{T01})
Let $G^f,R$ be as above. Suppose $(G^f,R)$ is a black-box $(\eps,p=1-\delta)$-PRG with $a$ advice bits and $q$ queries. Then $E:\B^n \times \B^t \to \B^m$ defined by
$E(f,y)=G^f(y)$ is a $(k,b,2\epsilon)$ strong extractor against quantum storage, for
$k=\Omega({\log n \over \log(1/4\delta)}(a+qb)) +\logeps$.
\end{proposition}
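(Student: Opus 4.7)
The plan is to prove the contrapositive: if $E(f,y)=G^f(y)$ fails as a $(k,b,2\eps)$ strong extractor against quantum storage, then $k$ must be at most $O(\tfrac{\log n}{\log(1/4\delta)}(a+qb))+\logeps$. The overall strategy is to convert a successful quantum distinguisher into a short quantum random access encoding of a large subset of the source, and then invoke item~\ref{thm:random-access-codes:item-2} of Theorem~\ref{thm:random-access-codes}.

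\textbf{From distinguisher to per-input tests.} By the reduction to flat sources mentioned after the definition, it suffices to handle a flat $X\subseteq\B^n$ with $|X|=2^k$. Suppose a boolean POVM $\set{M_0,M_1}$ on $\cH_{2^{t+m}}\tensor\cH_{2^b}$ distinguishes $U_t\circ E(X,U_t)\circ\rho(X)$ from $U_{t+m}\circ\rho(X)$ with advantage $\ge 2\eps$. For each $x$, define a probabilistic oracle $T_x:\B^{t+m}\to\B$ by $\Pr[T_x(w)=1]=\Tr(M_1(\ketbra{w}{w}\tensor\rho(x)))$; one query to $T_x$ consumes one copy of $\rho(x)$. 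Averaging the distinguishing advantage over $x$ (and, if needed, swapping $M_0\leftrightarrow M_1$ so the signed advantage is $\ge 2\eps$), a Markov-type argument shows that a set $S\subseteq X$ with $|S|\ge\eps 2^k$ satisfies: for every $x\in S$, the oracle $T_x$ $\eps$-distinguishes $U_t\circ G^x(U_t)$ from $U_{t+m}$.

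\textbf{From tests to a short random access code.} Applying the black-box PRG property to each $x\in S$, there exists an advice string $adv_x\in\B^a$ such that the reconstruction $R$ satisfies $\Pr[R^{T_x}(adv_x,i)=x_i]\ge 1-\delta$ for every $i\in[n]$. Define a quantum encoding of $S$ by
\[
\rho'(x)\;=\;\ketbra{adv_x}{adv_x}\,\tensor\,\rho(x)^{\tensor q}\;\in\;\cH_{2^{a+qb}}.
\]
The decoder for bit $i$ reads $adv_x$ classically and executes $R(adv_x,i)$, answering each of its at most $q$ oracle calls by applying the POVM $\set{M_0,M_1}$ to $\ket{w}\tensor\rho(x)$ using a fresh copy of $\rho(x)$ from the tensor product. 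The resulting POVMs $\set{\cE^i_0,\cE^i_1}$ succeed with probability at least $1-\delta$ for every $x\in S$ and every $i$, exhibiting a $\enc{S}{a+qb}{1-\delta}$ quantum random access encoding.

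\textbf{Finishing via the RAC lower bound.} By Theorem~\ref{thm:random-access-codes}\ref{thm:random-access-codes:item-2},
\[
a+qb\;\ge\;\Omega\!\left(\tfrac{\log(1/4\delta)}{\log n}\cdot\log|S|\right)\;\ge\;\Omega\!\left(\tfrac{\log(1/4\delta)}{\log n}\,(k-\logeps)\right),
\]
which rearranges to $k\le O(\tfrac{\log n}{\log(1/4\delta)}(a+qb))+\logeps$, contradicting the hypothesis on $k$. The main obstacle is the quantum-specific point that each call to $T_x$ is destructive, which forces the encoding to contain $q$ independent copies of $\rho(x)$ and hence pays the multiplicative factor $q$ in the $qb$ term; the averaging step (handling the signed advantage and reducing to flat sources) and the bookkeeping of the advice as $a$ classical ancilla bits in the RAC are routine once that is in place.
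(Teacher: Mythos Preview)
Your proposal is correct and follows essentially the same approach as the paper's proof: identify the set of inputs on which the distinguisher has per-input advantage $\ge\eps$, build a $\enc{\cdot}{a+qb}{1-\delta}$ quantum random access encoding for that set by storing the advice string together with $q$ fresh copies of $\rho(x)$ (one per reconstruction query), and then invoke item~\ref{thm:random-access-codes:item-2} of Theorem~\ref{thm:random-access-codes}. The only cosmetic difference is that the paper bounds the size of the universal bad set $\cF=\{f:\text{$T$ $\eps$-distinguishes on $f$}\}$ directly and reads off $k=\log(|\cF|/\eps)$, whereas you phrase it as a contrapositive with a Markov step producing $S\subseteq X$; these are the same argument.
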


\begin{proof}
Let $T$ be a quantum test using $b$ qubits of side information $\rho$. We currently think of $T$ as a probabilistic oracle. Let
$\cF$ be the set of all functions $f \in \B^n$ for which $T$
$\eps$-distinguishes $U_t \times E(f,U_t) \times \rho(f)$ from $U_t
\times U_m \times \rho(f)$. We will show $|\cF| = 2^{O((a+qb) \cdot \log(n)/log(1/4\delta))}$. It will then follow that for any $X \subseteq \B^n$,
$|\Pr [T(U_t \times E(X,U_t) \times \rho(X))=1] -\Pr [T(U_t
\times U_m \times \rho(X))=1]|
\le \E_{x \in X}|\Pr [T(U_t \times E(x,U_t) \times \rho(x))=1] -\Pr [T(U_t \times U_m \times \rho(x))=1]| \le \eps+\Pr_{x \in X}[x \in \cF]$. Thus,
$E$ is a $(\log{|\cF| \over \eps},b,2\eps)$ strong extractor against quantum storage.

We now show $\cF$ is indeed small. For any $f \in \cF$, given the right advice $adv=adv(T,f) \in \B^a$ the circuit $R^T(adv,\cdot)$ computes $f:[n] \to \B$ with $q$ queries to $T$ and worst-case (over $i$) success probability $p$. We replace each of the $q$ queries to $T$ with a quantum circuit acting on its classical input and an independent $b$-qubit state that is initialized to $\rho(f)$. Thus, altogether, the new circuit uses $qb$ qubits of side information. Notice that because the inputs to the different queries are in product state, the answers to the $T$ queries are independent.  The resulting quantum circuit recovers the bits of $f:[n] \to \B$ with probability $p$ (\emph{worst-case} over $i$).
Thus, $\cF$ has a random access code of length $a+qb$ and worst-case success $p=1-\delta$. By Theorem
\ref{thm:random-access-codes}, item (\ref{thm:random-access-codes:item-2}), $a+qb = \Omega( {\log(1/4\delta) \over \log n} log |\cF| )$ as desired.
\end{proof}

Thus, we reduced the problem of finding extractors against quantum storage to the classical question of finding good black-box PRG with \emph{few queries}. In the next section we will prove:

\begin{theorem}\label{thm:black-box-PRG:few-queries}
Let $\eps>0$, $m\le n$.
There exists an explicit black-box $(\eps,1-{\eps \over 2m})$ PRG $G^{f:[n] \to \B}:\B^t \to \B^m$ with $a=O(m^2+\log {n \over \eps})$ advice bits, seed length $t=O({\log ^2{n \over \eps} \over \log m})$ and $q=\poly(\log n,{m \over \eps})$ queries.
\end{theorem}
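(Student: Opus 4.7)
The plan is to instantiate Trevisan's construction with the probabilistic-oracle local list decoder of Theorem \ref{thm:stv:probabilistic-oracle} in place of the classical local list decoder. Set $\delta=\eps/(2m)$. First, I will encode $f\in\B^n$ using the code of Theorem \ref{thm:stv:probabilistic-oracle} with parameter $\delta$, obtaining $\wh f:[\bn]\to\B$ with $\bn=\poly(n,m/\eps)$. Second, I will use a Nisan--Wigderson design $S_1,\ldots,S_m\subseteq[t]$ with $|S_i|=\ell=\log\bn$ and $|S_i\cap S_j|\le\log m$ for $i\ne j$; the standard construction achieves $t=O(\ell^2/\log m)=O(\log^2(n/\eps)/\log m)$. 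The generator is $G^f(y)=(\wh f(y|_{S_1}),\ldots,\wh f(y|_{S_m}))$.

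For the reconstruction, suppose $T$ $\eps$-distinguishes $U_t\circ G^f(U_t)$ from $U_{t+m}$. Yao's distinguisher-to-predictor lemma yields an index $i\in[m]$ together with a predictor $P^T$ that, given $y\in U_t$ and the first $i-1$ bits of $G^f(y)$, guesses $\wh f(y|_{S_i})$ with probability $\ge\half+\eps/(2m)=\half+\delta$. Following Trevisan, once the outside bits $y|_{[t]\setminus S_i}$ are fixed, each function $y|_{S_i}\mapsto\wh f(y|_{S_j})$ for $j<i$ depends on only the $\le\log m$ coordinates of $S_j\cap S_i$ and so has a truth table of at most $m$ bits. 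An averaging argument selects a specific outside fixing that retains the $\delta$ advantage. The reconstruction advice consists of the index $i$ ($\log m$ bits), the $i-1$ hardcoded truth tables ($O(m^2)$ bits), and a list-decoding index $k\in[L]$ ($O(\log(n/\eps))$ bits); all three parts depend on $T$ and $f$. This gives $a=O(m^2+\log(n/\eps))$ as required.

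Given this advice, I define a probabilistic oracle $O:[\bn]\to\B$ for $\wh f$ as follows: on input $j\in[\bn]$, interpret $j$ as a candidate value of $y|_{S_i}$, compute the first $i-1$ output bits from the hardcoded truth tables, feed $y$ and these bits to $P^T$, and output $P^T$'s prediction. Each $O$-call costs exactly one $T$-call. Since $y|_{S_i}$ is uniform on $[\bn]$ when $y\in U_t$ is uniform, the predictor's advantage translates into $ag(O,\wh f)\ge\half+\delta$. I then invoke Theorem \ref{thm:stv:probabilistic-oracle} with the list index $k$: the local list decoder recovers any desired bit of $f$ with success probability $1-\delta=1-\eps/(2m)$ using $\poly(\log n,1/\delta)=\poly(\log n,m/\eps)$ oracle calls, giving the claimed query bound.

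The main obstacle I foresee is verifying that the probabilistic oracle $O$ built above really satisfies the hypothesis $ag(O,\wh f)\ge\half+\delta$ of Theorem \ref{thm:stv:probabilistic-oracle}. Yao's lemma delivers an advantage averaged over $y\in U_t$ and the randomness of $T$, whereas the oracle definition averages over queries $j\in[\bn]$ and $O$'s internal coins. The translation hinges on the facts that $y|_{S_i}$ is uniform on $[\bn]$ and that the outside bits can either stay as part of $O$'s randomness or be fixed in advance by an averaging step without losing more than a constant factor in the advantage. Once this point is pinned down carefully, the remaining parameter bookkeeping (in particular, checking $\bn$, $\ell$, $t$, $a$, and the query count) follows directly from Trevisan's original analysis.
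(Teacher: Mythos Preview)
Your proposal is correct and follows essentially the same approach as the paper: instantiate Trevisan's construction $\TR^f=\NW^{\cC(f)}$ with the probabilistic-oracle local list-decodable code of Theorem~\ref{thm:stv:probabilistic-oracle}, use the NW average-case reconstruction (Yao predictor plus $O(m^2)$ bits of truth-table advice) to obtain a probabilistic oracle with agreement $\half+\delta$ on $\cC(f)$, and then invoke the local list-decoder with $O(\log L)$ additional advice bits. The paper packages the second half of your argument as a standalone lemma (Lemma~\ref{lem:black-box:average-to-worst-case:few-queries}), but the content and parameter bookkeeping are identical; in particular, the concern you flag about $ag(O,\wh f)\ge\half+\delta$ is exactly why the paper introduces the probabilistic-oracle variant of local list-decoding and is handled just as you outline.
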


Plugging Thm \ref{thm:black-box-PRG:few-queries} into Proposition \ref{prop:black-box:extractor-against-quantum-storage} we get
Theorem \ref{thm:main}.

\subsection{A black-box PRG with few queries}

Trevisan's PRG \cite{T01} is based on the Nisan-Wigderson PRG \cite{NW94}, which
has a \emph{good on average} reconstruction algorithm. Formally,

\begin{definition}
Let $G^f,R$ be as above.
$(G^f,R)$ is a black-box $(\epsilon,p)$-PRG with
\emph{average-case} reconstruction with $a$ advice bits and $q$ queries,
if for every Boolean function
$f: [n] \to \B$, and every probabilistic oracle $T$ that
$\epsilon$--distinguishes $U_t \circ G^f(U_t)$ from uniform, there
exists an advice $adv=adv(T,f) \in \B^a$ such that $R^T(adv,x)$ makes at most $q$ queries to $T$ and $\Pr [R^T(adv,i)=f(i)] \ge p$, where the probability is over a
uniform $i \in [n]$ and the internal coins of $R$ and $T$.
\end{definition}

The $\NW$ PRG is a black-box PRG with average-case reconstruction. Specifically, for every $\epsilon>0$, $\NW^{f:[n] \to \B}:\B^t \to \B^m$
has $(\epsilon,p=\half+{\epsilon \over 2m})$ average-case
reconstruction with $a=O(m^2)$ advice bits and $t=O({\log^2 n \over \log m})$. The
$\NW$ reconstruction algorithm uses exactly one oracle call to the
distinguishing algorithm. Trevisan used that to prove the following:

\begin{lemma}(Trevisan's worst-case to average-case reduction for black-box PRG)
\label{lem:black-box:average-to-worst-case:Trevisan}
Assume $(G^f,R)$ is a black-box $(\epsilon,\half+\delta)$-PRG with
average-case reconstruction using $a$ advice bits. Further assume the reconstruction algorithm $R$ is deterministic. Let $\cC[\bn,n]_2$ be
a $(\half+\delta,L)$ list-decodable code.
Define  $\TR^f(y)=\NW^{\cC(f)}(y)$.
Then $\TR^f$ is a black-box $(\epsilon,p)$-PRG with $a+\log L$ advice bits.
\end{lemma}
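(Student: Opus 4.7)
The plan is to follow Trevisan's standard worst-case-to-average-case reduction via list decoding. Suppose a probabilistic oracle $T$ is an $\epsilon$-distinguisher of $U_t \circ \TR^f(U_t)$ from uniform. Since $\TR^f(y)=\NW^{\cC(f)}(y)$, the same $T$ also $\epsilon$-distinguishes $U_t \circ \NW^{\cC(f)}(U_t)$ from uniform. Applying the average-case reconstruction property of $\NW$ with the Boolean function $\cC(f):[\bn]\to\B$ playing the role of the hard function yields an advice string $adv=adv(T,\cC(f))\in\B^a$ such that $\Pr_{j\in[\bn],R,T}[R^T(adv,j)=\cC(f)_j]\ge \half+\delta$.

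Next I would convert this average-case reconstruction of $\cC(f)$ into a fixed string of high agreement. Since $R$ is deterministic by hypothesis, the only remaining randomness is that of $j$ and of the probabilistic oracle $T$, so an averaging argument gives coins $r^*$ for $T$ for which $\Pr_{j\in[\bn]}[R^{T_{r^*}}(adv,j)=\cC(f)_j]\ge \half+\delta$. The deterministic string $y(j):=R^{T_{r^*}}(adv,j)$ therefore satisfies $ag(y,\cC(f))\ge(\half+\delta)\bn$.

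Now I would invoke list decodability: because $\cC$ is $(\half+\delta,L)$ list-decodable and $y$ agrees with $\cC(f)$ on at least a $(\half+\delta)$-fraction of positions, the codeword $\cC(f)$ lies in the decoding list of $y$, which has size at most $L$. Specifying the index $k\in[L]$ of $f$ in this list with an additional $\lceil\log L\rceil$ bits of advice then identifies $f$ uniquely. The reconstruction algorithm for $\TR^f$ on advice $(adv,k)$ and input $i\in[n]$ simulates $y$ via $R^T(adv,\cdot)$, runs the list decoder of $\cC$ on $y$ to recover the $L$ candidate messages, and outputs the $i$-th bit of the $k$-th one; the total advice length is $a+\log L$, and since $f$ is identified exactly, the reconstruction succeeds with probability essentially $1$.

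The main subtlety I would want to treat carefully is the handling of the probabilistic oracle $T$: an oracle machine given access to $T$ cannot pin down $T$'s coins to $r^*$ on its own. One clean route is to enlarge the advice with a short description of $r^*$, absorbing this cost into $a$; another is to exploit the fact that $\NW$'s reconstruction queries $T$ only once, so the "string" $y$ can be viewed as a one-query probabilistic oracle and handed to a probabilistic-oracle version of list decoding, in the spirit of the framework built in Section~\ref{sec:soft-decision-local-list-decoding}. In either framing, the combinatorial count of $f$'s that $T$ can distinguish is bounded by $2^a\cdot L$, which is exactly what the stated advice length $a+\log L$ asserts.
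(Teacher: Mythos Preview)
Your argument is essentially the same as the paper's: apply $\NW$'s average-case reconstruction to $\cC(f)$, obtain a string $y\in\B^{\bn}$ with $(\half+\delta)$-agreement with $\cC(f)$, list-decode, and add $\log L$ advice bits to pick out $f$. The one cosmetic difference is in how the randomness of $T$ is handled: the paper simply asserts ``w.l.o.g.\ we can assume $T$ is deterministic'' at the outset (the usual averaging over $T$'s coins preserves an $\epsilon$-distinguisher), so that $R^T(adv,\cdot)$ is already a fixed string and no further discussion is needed. Your more elaborate treatment of the ``$r^*$ subtlety'' arrives at the same place but is unnecessary once you make that reduction first; the paper also explicitly records that this reconstruction uses $\bn$ queries, which is the whole point of the subsequent lemma.
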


\begin{proof}
Suppose $T$ $\eps$--breaks the PRG $\TR^f=\NW^{\cC(f)}$. W.l.o.g. we can assume $T$ is deterministic.
Let $\bar{f}=\cC(f) \in \B^\bn$. Given the right advice $adv=adv(f,T)$ to $R$,
$R^T(adv,\cdot)$ is a deterministic function computing $\bar{f}_i$
with average success probability $p$ over $i \in [\bn]$, and using only one query to $T$. The advice to the new reconstruction algorithm $R'$ includes the string $adv$. $R'$ uses the reconstruction algorithm $R^T(adv,\cdot)$ on each $j \in [\bn]$. The resulting string $\wh{y} \in \B^\bn$ has $(\half+\delta)\bn$ agreement with $\bar{f}$. We now use the list decoding algorithm to get a list of up to $L$ codewords in $\cC$ that are $\half+\delta$ close to $\wh{y}$. We know $f$ is the list. By adding $\log(L)$ bits to the advice, we can let the advice tell us which of the codewords in the list is $f$. We have recovered $f$ using $a+\log L$ advice bits and $\bn$ queries.
\end{proof}

Trevisan could tolerate $\bn$ queries. We, however,
in light of Proposition \ref{prop:black-box:extractor-against-quantum-storage},
need to reduce the number of queries. We still want, however, a \emph{worst-case} reconstruction. The idea is to take $\cC$ to be a locally list-decodable code. As our oracle is a probabilistic function what we actually need is a probabilistic oracle, locally list-decodable code. This leads to:

\begin{lemma}(worst-case to average-case reduction for black-box PRG using only few queries)
\label{lem:black-box:average-to-worst-case:few-queries}
Assume $(G^f,R)$ is a black-box $(\epsilon,\half+\delta)$-PRG with
average-case reconstruction using $a$ advice bits. Let $\cC$ be
a $(p=\half+\delta,L,q,\beta)$ probabilistic oracle, local list-decodable binary code.
Define  $\TR^f(y)=\NW^{\cC(f)}(y)$.
Then $\TR^f$ is a black-box $(\epsilon,\beta)$-PRG with $a+\log L$ advice bits and $q$ queries.
\end{lemma}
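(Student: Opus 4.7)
The plan is to mimic the proof of Lemma \ref{lem:black-box:average-to-worst-case:Trevisan} but to replace the global list-decoder used there with the \emph{probabilistic-oracle local} list-decoder supplied by Theorem \ref{thm:stv:probabilistic-oracle}. Doing so will cut the query cost from $\bn$ down to $q$ while preserving the rest of Trevisan's bookkeeping.

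First I will suppose some probabilistic oracle $T$ $\epsilon$-distinguishes $\TR^f(U_t)=\NW^{\cC(f)}(U_t)$ from uniform. Because $\NW$ has an average-case reconstruction, there is an advice string $adv \in \B^a$ (depending on $f$ and $T$) such that the map $O(j) \eqdef R^T(adv,j)$ is a probabilistic oracle $O:[\bn]\to\B$ with agreement $ag(O,\cC(f)) \ge \half + \delta$. Crucially, each evaluation of $O$ costs only a single call to $T$, because the $\NW$ reconstruction uses exactly one oracle query.

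Next I will invoke the probabilistic-oracle local list-decoder for $\cC$: by its defining property there exists an index $k \in [L]$ such that for every $i \in [n]$, $\Pr_A[A^O(k,i)=f(i)] \ge \beta$. I will take the new reconstruction algorithm $R'$ to be this local list-decoder $A$, with advice $(adv,k) \in \B^{a+\log L}$, routing each of its $q$ oracle calls to $O$ through a single call to $T$. Then $R'$ makes at most $q$ queries to $T$ and recovers $f(i)$ with probability at least $\beta$ for every $i$, which yields the claimed $(\epsilon,\beta)$-PRG with $a+\log L$ advice bits and $q$ queries.

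The one conceptually delicate point, and the reason the preceding section went to the trouble of establishing the \emph{probabilistic-oracle} version of local list-decoding rather than the deterministic version, is that $R^T(adv,\cdot)$ is genuinely a randomized function (over $R$'s coins and $T$'s measurement outcomes), and the list index $k$ must depend only on $O$, not on any single string sampled from $O$. This is precisely why the naive ``first sample, then run a deterministic local list-decoder'' shortcut fails and why Theorem \ref{thm:stv:probabilistic-oracle} is needed. Once that ingredient is in hand, the remaining steps are a routine rewiring of Trevisan's argument.
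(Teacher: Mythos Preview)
Your proposal is correct and follows essentially the same route as the paper's own proof: use the $\NW$ average-case reconstruction to build a probabilistic oracle $O=R^T(adv,\cdot)$ that agrees $\half+\delta$ with $\cC(f)$ using one $T$-query per call, then run the probabilistic-oracle local list-decoder $A$ on $O$ with the extra $\log L$ bits of advice to recover each $f(i)$ with probability $\beta$ using $q$ queries. Your additional paragraph explaining why the deterministic local list-decoder would not suffice is a welcome clarification that the paper only alludes to in the preceding section.
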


\begin{proof}
Suppose $T$ $\eps$--breaks the PRG $\TR^f=\NW^{\cC(f)}$.
Let $\bar{f}=\cC(f)$. Given the right advice $adv=adv(f,T)$ to $R$,
$R^T(adv,i)$ computes $\bar{f}_i$ with
average success probability $p=\half+\delta$ over $i \in [\bn]$ and a single query to $T$. The advice to the new reconstruction algorithm $R'$ includes the string $adv$.

Now assume we ask $R'$ for the value of $f_i$, $i \in [n]$, i.e., we wish to compute $R'^T(adv,i)$. We do that as follows. We apply the probabilistic oracle, local list-decoding algorithm of $\cC$, and get $q$ queries $i_1,\ldots,i_q \in [\bn]$ to $\bar{f}=\cC(f)$. We answer the $j$'th query with the probabilistic oracle
$R^T(adv,i_j)$ and we output the decoding result.
By the probabilistic oracle, local list-decoding property, for every $i \in [n]$ the reconstruction oracle $R'^T$, with additionally the right $k \in [L]$, outputs the right answer with probability at least $\beta$.
\end{proof}

Putting it together, we prove Theorem \ref{thm:black-box-PRG:few-queries}

\begin{proof}
Let $\eps>0,  m \le n$.
Let $\NW^{f:[n] \to \B}:\B^{t'} \to \B^m$ be the Nisan-Wigderson PRG with $a=O(m^2)$ advice bits and $t'=O({\log ^2n \over \log m})$. Nisan and Wigderson showed that $\NW^f$ is a black-box $(\eps,\half+\delta)$ PRG with average reconstruction and $\delta={\eps \over 2m}$.

Let $\cC$ be the $(p=\half+\delta,L=poly(\bn),q=poly(\log n,{1 \over \delta}),\beta=1-\delta)$ probabilistic oracle, local list-decodable binary code of Theorem \ref{thm:stv:probabilistic-oracle}. Define $\TR^{f:[n] \to \B}:\B^{t'} \to \B^m$ by $\TR^f(y)=\NW^{\cC(f)}(y)$ with $t'=O({\log^2 {n \over \delta} \over \log m})$. By Lemma \ref{lem:black-box:average-to-worst-case:few-queries} $\TR^f$ is a black-box $(\eps,1-\delta)$ PRG with $a=O(m^2+\log {n \over \eps})$ advice bits and $q$ queries.
\end{proof}

\section{Open problems}

The ideal solution to the problem of classical extractors against quantum storage, is to find a natural, generic transformation from a strong extractor to a strong extractor against quantum storage with about the same parameters. Gavinsky et. al. \cite{GKKRW07} showed this is impossible. Is there a natural class of constructions that does hold against quantum storage? Even if not, a natural objective is to prove that many of the current explicit extractors (and in particular \cite{LRVW03,GUV07,DW08}) are good even against quantum storage.

The parameters given in Theorem \ref{thm:main} can probably be improved. It would be interesting to construct an extractor against quantum storage with logarithmic seed length and arbitrarily small polynomial error, as this may serve as a building block in other constructions.

\section*{Acknowledgements}
I would like to thank Avi Ben-Aroya, Ashwin Nayak, Oded Regev and Pranab Sen for stimulating talks on the subject. I also thank Oded for the example showing that the bound of Theorem \ref{thm:random-access-codes} is tight.

\bibliographystyle{plain}
\bibliography{refs}
\end{document}